\documentclass[journal]{IEEEtran}

\usepackage{amsmath,amssymb,amsthm,mathtools,bm}
\usepackage{booktabs,array}
\usepackage{tikz}
\usetikzlibrary{bayesnet,positioning,arrows.meta,fit,backgrounds}
\usepackage[colorlinks=true,allcolors=blue]{hyperref}
\usepackage{cite}
 \usepackage{pgfplots}
    \pgfplotsset{compat=1.18}     \usepgfplotslibrary{groupplots}

\newtheorem{proposition}{Proposition}
\newtheorem{lemma}{Lemma}
\theoremstyle{definition}

\newtheorem{example}{Example}
\theoremstyle{remark}
\newtheorem{remark}{Remark}

 \DeclareMathOperator{\CVaR}{CVaR}
\newcommand{\Dinf}{\mathrm{D}_{\infty}}

\newcommand{\E}{\mathbb{E}}
\newcommand{\R}{\mathbb{R}}
\newcommand{\Sset}{\mathcal{S}}
\newcommand{\Oset}{\mathcal{O}}
\newcommand{\Aset}{\mathcal{A}}

\newcommand{\Dset}{\mathcal{D}}
\newcommand{\Pset}{\mathcal{P}}
\newcommand{\TV}{\mathrm{TV}}

\DeclareMathOperator*{\argmax}{arg\,max}

\begin{document}

\title{Decision Making Needs Uncertainty Quantification [Lecture Notes]}
\author{Osvaldo Simeone\thanks{O. Simeone is with the Institute for Intelligent Networked Systems, Northeastern University London, One Portsoken Street, London, E1 8PH, UK (email:  o.simeone@nulondon.ac.uk). The work of O. Simeone was supported by the European Research Council (ERC) under the European Union’s Horizon Europe Programme (grant agreement No. 101198347), by an Open Fellowship of the EPSRC (EP/W024101/1), and by the EPSRC project (EP/X011852/1). }}


\maketitle

\begin{abstract}
Many signal processing systems ultimately exist to {act}. Whenever the state variable that
determines the action to be taken by a decision maker, or agent, is uncertain, the way that uncertainty is
represented decides how
well the agent performs and how much its performance can be trusted. This
lecture note develops, from first principles and within a single 
decision-theoretic setting, the link between the {objective} and the knowledge of an agent and the form of uncertainty representation that is
sufficient to act optimally. To start, assuming a known environment distribution, we show that a risk-neutral agent needs the
posterior distribution over the state, whereas a risk-averse
agent can rely without loss of optimality on a {prediction set} and a worst-case decision rule. We then
turn to the  case in which the environment is unknown, and
identify three complementary approaches to address  the resulting epistemic
uncertainty: calibration of a fixed predictor, credal (ambiguity) sets with
distributionally robust optimization, and Bayesian inference over model
parameters. The common thread is that reliable decisions require an
uncertainty representation matched to the decision objective and to the knowledge profile of the agent,  together with
a guarantee that certifies the utility the agent
will actually obtain.
\end{abstract}

\section{Relevance}
Point predictions are rarely the end goal of a signal processing pipeline, which ultimately needs to produce an action or a decision. For example, based on a prediction, a detector triggers an alarm, a scheduler
assigns power or bandwidth, a controller selects a gain, a diagnostic system
recommends a test. This lecture note explains, at a level accessible to a
first-year graduate student, \emph{why} good decisions require uncertainty
quantification and \emph{which} representation of uncertainty is appropriate
for a given decision objective and state of knowledge. {The material connects three topics 
that are usually taught separately, namely Bayesian decision theory, calibration
of machine-learning predictors, and  robust
optimization, by viewing them as solutions to the same decision problem under different assumptions about the knowledge of the agent about the system state. {Overall, these notes demonstrate that a reliable decision agent needs tailored mechanisms for uncertainty quantification in order to provide predictable performance guarantees. It also connects and contrasts predictability with the separate requirements of consistency, robustness, and safety \cite{rabanser2026towards}.} }

\section{Prerequisites}
The reader is assumed to be comfortable with elementary probability
(conditional distributions, expectation, Bayes' rule), with the notion of an
optimization problem and its optimal value, and with basic estimation. No
prior exposure to conformal prediction, imprecise probability, or
distributionally robust optimization is required.  Familiarity with Bayesian inference at the
level of an introductory machine-learning course \cite{simeone2022ml} is
helpful for the final section.

\section{Problem Statement}
\label{sec:setting}
We consider an agent that acts once in a given environment. The problem is
described by the following random variables:
\begin{itemize}
\item a \emph{state} $s\in \Sset$ with distribution $s\sim p(s)$;
\item an \emph{observation} $o\in \Oset$ with conditional distribution $p(o| s)$;
\item an \emph{action} $a\in \Aset$ drawn from a policy $\pi(a| o)$ as $a \sim \pi(a| o)$;
\item and a scalar \emph{utility}, or reward, $u(s,a)$, depending on state $s$ and action $a$.
\end{itemize}
The joint distribution of the random variables  $(s,o,a)$, taking values in sets $\mathcal{S}$, $\mathcal{O}$, and $\mathcal{A}$, respectively,  factorizes as
\begin{equation}
\label{eq:joint}
p(s,o,a)=\underbrace{p(s)\,p(o| s)\,}_{\text{environment }p(s,o)}\underbrace{\pi(a| o)}_{\text{policy}},
\end{equation}
with utility $u(s,a)$. A graphical representation of the joint distribution, known as  Bayesian network,   is shown in
Fig.~\ref{fig:bn}. The agent controls the policy $\pi(a| o)$, while the
\emph{environment} is described by the joint distribution
$p(s,o)=p(s)\,p(o| s)$, which dictates the state distribution $p(s)$ and the
observation model $p(o|s)$. We assume throughout that the agent cannot modify the
environment $p(s,o)$ and has access only to the observation $o$, and not to
the underlying state $s$, when selecting an action $a\sim \pi(a|o)$.

The central question of this lecture note is: \emph{what must the agent know
about the uncertain state $s$, and in what form, in order to act optimally?} As
we will see, the answer depends both on the agent's attitude toward risk and
on whether the environment is known.

\begin{figure}[t]
\centering
\begin{tikzpicture}[
  >={Stealth[length=2mm]},
  node distance=14mm,
  every node/.style={inner sep=2pt}]
  \node[latent] (s) {$s$};
  \node[obs, right=of s] (o) {$o$};
  \node[latent, right=of o] (a) {$a$};
  \node[const, below=10mm of o] (u) {$u(s,a)$};
  \edge {s} {o};
  \edge {o} {a};
  \draw[->,dashed] (s) -- (u);
  \draw[->,dashed] (a) -- (u);
\end{tikzpicture}
\caption{Bayesian network for the decision problem under study in these notes: The state $s$ is hidden, and the only information available on it is through the observation $o$; the agent's policy maps the observation $o$ to an action
$a$; and the utility $u(s,a)$ 
depends on the state $s$ and the action $a$.}
\label{fig:bn}
\end{figure}

\begin{remark}[Goal-oriented decision making]
The setting extends to a goal-oriented decision maker by introducing a
\emph{goal} variable $g\sim p(g)$. This only changes the policy to
$a \sim \pi(a| g,o)$ and the utility to $u( g,s,a)$. Accordingly, all the
results below extend directly by conditioning on the goal.
\end{remark}

{ 
\begin{remark}[Action space] The action space $\mathcal{A}$ may include special actions such as abstention and ``safe'' actions. With abstention,  the agent refuses to take an action in the environment, possibly deferring to a higher authority such as a human in the loop or a cloud-based large model. An abstention  action may be assigned a negative utility when judging the performance of the agent; or, alternatively, the utility accrued by the final, deferred, action, when evaluating the end-to-end performance of the system. ``Safe'' action may come with utility guarantees that hold under any state compatible with the given observation, offering a default choice that does not achieve peak performance but yields deterministically predictable outcomes.

\end{remark}}
\section{Solution}
We build the answer in stages. Sections~\ref{sec:known} and \ref{sec:ra}
treat a \emph{known} environment and show how the decision objective
(risk-neutral vs.\ risk-averse) dictates the form of uncertainty
representation. Section~\ref{sec:unknown} then confronts an \emph{unknown}
environment and introduces three complementary tools to address the resulting
epistemic uncertainty. The five resulting settings are summarized in
Fig.~\ref{fig:racinterface}.

Throughout the text, we denote the distribution of a  random variable $x\in \mathcal{X}$ with support given by set $\mathcal{X}$ as $p(x)$, not differentiating between the value $p(x)$ for a given numerical realization $x$ and the entire function $\{p(x)\}_{x\in \mathcal{X}}$.  We also do not differentiate between random variable $x$ and its realization, which is also denoted as $x$. We use $\E_{x\sim p(x)}[\cdot]$ and $\Pr_{x\sim p(x)}[\cdot]$ for expectation and probability evaluated with respect to distribution $p(x)$.

\subsection{Known Environment, Risk-Neutral Agent: The Posterior as Interface}
\label{sec:known}
A \emph{risk-neutral} agent optimizes the average utility, solving the problem
\begin{equation}\label{eq:opt1}
\max_{\pi(a| o)}\ \E_{(s,o,a)\sim p(s,o)\pi(a| o)}\!\left[u(s,a)\right],
\end{equation} where the utility $u(s,a)$ is averaged over both the environment distribution $p(s,o)$ and the policy $\pi(a|o)$. 
By the law of iterated expectations, the optimization (\ref{eq:opt1}) decouples across
observations, yielding the equivalent form \begin{equation}\label{eq:inner}
\E_{o\sim p(o)}\!\left[\max_{\pi(a| o)} \E_{a\sim \pi(a|o)}[U(a| o)]\right],
\end{equation} where
\begin{equation}\label{eq:uoa}
U(a| o)=\E_{s\sim p(s| o)}\!\left[u(s,a)\right]
\end{equation}
is the average utility of action $a$ conditioned on the environment producing
observation $o$. The conditional state distribution used in the averaged utility (\ref{eq:uoa}) follows from Bayes' rule as
\begin{equation}
\label{eq:posterior}
p(s| o)\propto p(s)\,p(o| s).
\end{equation}  { The distribution $p(s|o)$ captures the inherent, \emph{aleatoric}, uncertainty affecting the state $s$ for the given observation $o$ \cite{simeone2022ml}. }

By the equivalent form (\ref{eq:inner}), the agent can optimize its action separately for each observation $o$ based on the conditional state distribution $p(s|o)$. Furthermore, because the objective in the inner optimization in \eqref{eq:inner} is linear in the policy, the problem 
is solved by a deterministic policy choosing the action
\begin{equation}\label{eq:optaction}
a^*(o)=\argmax_{a \in \Aset} U(a| o).
\end{equation} An example of utility $u(s,a)$, posterior $p(s|o)$, and average utility $U(a|o)$ is shown in Fig. \ref{fig:ex}.

The key result of this section is that the posterior distribution $p(s| o)$ is a {sufficient statistic} of the
observation $o$ for the decision problem. An optimal agent need retain
nothing about $o$ or the environment beyond $p(s| o)$. Equivalently, as summarized in  Fig.~\ref{fig:racinterface}(a), the
map $o\mapsto p(s| o)$ is the optimal \emph{interface} between the agent
and its environment.

\begin{figure}[t]
\centering
\begin{tikzpicture}[
  >={Stealth[length=2mm]},
  box/.style={draw,rounded corners,inner sep=2.5pt,align=center,font=\scriptsize,
              minimum height=8mm,minimum width=22mm},
  panel/.style={font=\scriptsize\bfseries, anchor=east},
  every node/.style={inner sep=2pt}]
  \def\rowsep{11mm}
  \node[obs] (o1) at (0,0) {$o$};
  \node[box, right=6mm of o1] (i1) {posterior\\$p(s| o)$};
  \node[box, right=4mm of i1] (r1) {expectation\\$\argmax_a\E_{p(s| o)}[u]$};
  \node[latent, right=5mm of r1] (a1) {$a$};
  \draw[->] (o1) -- (i1); \draw[->] (i1) -- (r1); \draw[->] (r1) -- (a1);
  \node[panel] at ([xshift=-2.5mm]o1.west) {(a)};
  \node[obs, below=\rowsep of o1] (o2) {$o$};
  \node[box, right=6mm of o2] (i2) {prediction set\\$\mathcal{C}_{\alpha}(o)\subseteq\Sset$};
  \node[box, right=4mm of i2] (r2) {max--min\\$\argmax_a\min_{s\in \mathcal{C}_{\alpha}(o)}u$};
  \node[latent, right=5mm of r2] (a2) {$a$};
  \draw[->] (o2) -- (i2); \draw[->] (i2) -- (r2); \draw[->] (r2) -- (a2);
  \node[panel] at ([xshift=-2.5mm]o2.west) {(b)};
  \node[obs, below=\rowsep of o2] (o3) {$o$};
  \node[box, right=6mm of o3] (i3) {calibrated pred.\\$\hat{p}(s| o)$};
  \node[box, right=4mm of i3] (r3) {expectation\\$\argmax_a\E_{\hat{p}(s| o)}[u]$};
  \node[latent, right=5mm of r3] (a3) {$a$};
  \draw[->] (o3) -- (i3); \draw[->] (i3) -- (r3); \draw[->] (r3) -- (a3);
  \node[panel] at ([xshift=-2.5mm]o3.west) {(c)};
  \node[obs, below=\rowsep of o3] (o4) {$o$};
  \node[box, right=6mm of o4] (i4) {credal set\\$\Pset(\Dset_o)$};
  \node[box, right=4mm of i4] (r4) {DRO\\$\argmax_a\min_{q\in\Pset(\Dset_o)}\E_{q}[u]$};
  \node[latent, right=5mm of r4] (a4) {$a$};
  \draw[->] (o4) -- (i4); \draw[->] (i4) -- (r4); \draw[->] (r4) -- (a4);
  \node[panel] at ([xshift=-2.5mm]o4.west) {(d)};
  \node[obs, below=\rowsep of o4] (o5) {$o$};
  \node[box, right=6mm of o5] (i5) {param.\ posterior\\$p(\theta|\Dset)$};
  \node[box, right=4mm of i5] (r5) {predictive exp.\\$\argmax_a\E_{p(s| o,\Dset)}[u]$};
  \node[latent, right=5mm of r5] (a5) {$a$};
  \draw[->] (o5) -- (i5); \draw[->] (i5) -- (r5); \draw[->] (r5) -- (a5);
  \node[panel] at ([xshift=-2.5mm]o5.west) {(e)};
\end{tikzpicture}
\caption{The optimal {interface} between the agent and its environment
across the five settings studied in the text: Each
row maps the observation $o$ (with the dataset $\Dset$ or dataset $\mathcal{D}_o$ where available) to a
{sufficient statistic} and then to a {decision rule} selecting the
action $a$. \textbf{(a)} \emph{Risk-neutral agent in a known environment} (Sec.~\ref{sec:known}): The
statistic is the exact state posterior $p(s| o)$ and the rule is 
maximization of the expectation as in  \eqref{eq:optaction}. \textbf{(b)} \emph{Risk-averse agent in a known environment} 
(Sec.~\ref{sec:ra}): The statistic is a prediction set
$\mathcal{C}_{\alpha}(o)$ with guaranteed miscoverage probability $\alpha$, and the rule is the max--min policy
\eqref{eq:maxmin}. \textbf{(c)} \emph{Unknown environment and fixed predictor} (Sec.~\ref{sec:fixed}):
The statistic is the fixed predictor $\hat{p}(s| o)$ and 
maximization of the expectation in  \eqref{eq:optaction1} is optimal on average (among all policies dependent on the fixed predictor)  provided the
predictor is calibrated. \textbf{(d)} \emph{Unknown environment and data-driven non-parametric predictor}
(Sec.~\ref{sec:npar}): The statistic is the credal set $\Pset(\Dset_o)$ of
plausible posteriors given the observation-specific dataset $\mathcal{D}_o$ and the action follows the distributionally robust rule
\eqref{eq:mmaction}. \textbf{(e)} \emph{Unknown environment and data-driven parametric predictor} 
(Sec.~\ref{sec:par}): The statistic is the parameter posterior
$p(\theta|\Dset)$, which induces the predictive distribution $p(s| o,\Dset)$, and the action maximizes the corresponding average utility as in 
\eqref{eq:bayesact}.}
\label{fig:racinterface}
\end{figure}

\subsection{Risk-Averse Agent: Prediction Sets as Interface}
\label{sec:ra}
Instead of the mean, an agent may value the utility it can {guarantee
with high probability} over the unknown state $s\sim p(s| o)$. Fix a risk
level $\alpha\in(0,1)$. For an observation $o$ and action $a$, the
\emph{value-at-risk} at level $\alpha$ is
\begin{equation}
\label{eq:var}
V_\alpha(a| o)=\sup\Bigl\{\nu\in\R:\
\Pr_{s\sim p(s| o)}\!\left[\,u(s,a)\ge\nu\,\right]\ge 1-\alpha\Bigr\}.
\end{equation}
This is the largest utility attained with probability at least $1-\alpha$
under the posterior, or, equivalently, the lower $\alpha$-quantile of the utility
$u(s,a)$. The risk-averse agent maximizes it by addressing the problem 
\begin{equation}\label{eq:optactionalpha}
a_\alpha^*(o)=\argmax_{a \in \Aset} V_\alpha(a| o).
\end{equation}
As $\alpha\to0$ the objective concentrates on the worst outcomes, trading
expected utility for protection against low-utility states.

As shown next, this quantile objective can be recast as a worst-case
optimization over a {prediction set}, motivating a different interface
between decision maker and environment -- from a probability distribution to
a set. To elaborate, let a \emph{level-$\alpha$ prediction set} be a map
$o\mapsto \mathcal{C}_{\alpha}(o)\subseteq\Sset$ satisfying the coverage
guarantee
\begin{equation}
\label{eq:coverage}
\Pr_{s\sim p(s| o)}\!\left[\,s\in \mathcal{C}_{\alpha}(o)\,\right]\ge 1-\alpha.
\end{equation} By (\ref{eq:coverage}), the set $\mathcal{C}_{\alpha}(o)$ is guaranteed to contain the true state $s$ with probability no smaller than $1-\alpha$ given the observation $o$. An example of a level-$\alpha$ prediction set is shown in Fig. \ref{fig:ex}.

\begin{figure}[t]
  \centering
  \includegraphics[width=\columnwidth]{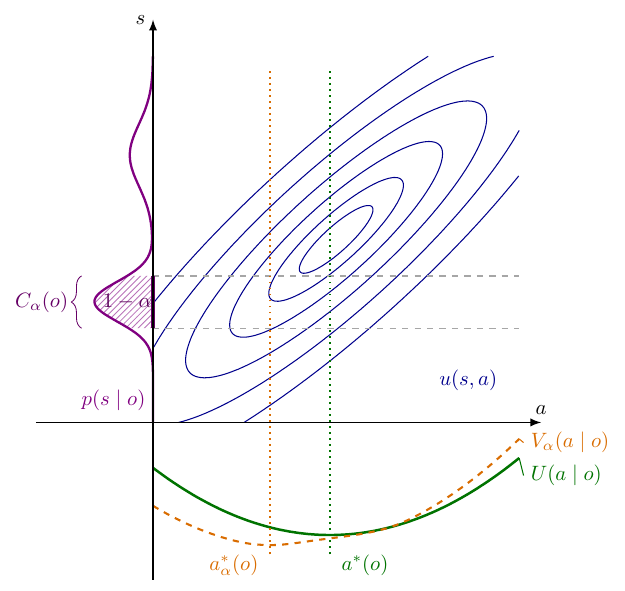}  
  \caption{Risk-neutral versus risk-averse decision making for a fixed
  observation~$o$. The  blue curves are level sets of the utility
  $u(s,a)$ over the state $s$ (vertical axis) and the action $a$
  (horizontal axis). The posterior $p(s|o)$ is shown against the
  state axis (left); the level-$\alpha$ prediction set $C_\alpha(o)$
  collects states carrying posterior mass $1-\alpha$ (hatched). 
  The average utility $U(a| o)$
  (solid) and the value-at-risk $V_\alpha(a| o)$ (dashed) yield the optimal action $a^*(o)$ in~\eqref{eq:optaction} and 
  $a^*_\alpha(o)$ in \eqref{eq:optactionalpha}, respectively. }
  \label{fig:ex}
\end{figure}

Given such a set, as also illustrated in Fig. \ref{fig:ex}, the \emph{max--min} policy selects the action maximizing
the worst-case utility within the set,
\begin{equation}
\label{eq:maxmin}
a_{\mathrm{mm}}(\mathcal{C}_{\alpha}(o))=\argmax_{a\in\Aset}\ \min_{s\in \mathcal{C}_{\alpha}(o)}u(s,a),
\end{equation}
with corresponding worst-case value
\begin{equation}
\label{eq:umm}
U_{\mathrm{mm}}(\mathcal{C}_{\alpha}(o))=\max_{a\in\Aset}\ \min_{s\in \mathcal{C}_{\alpha}(o)}u(s,a).
\end{equation}
As shown next, this worst-case value is an achievable value-at-risk for the
max--min action.

\begin{lemma}[Coverage yields a value-at-risk guarantee]
\label{lem:guarantee}
For any prediction set $\mathcal{C}_{\alpha}(o)$ satisfying the coverage
condition \eqref{eq:coverage}, the associated max--min action
$a_{\mathrm{mm}}(\mathcal{C}_{\alpha}(o))$ achieves a value-at-risk no smaller
than $U_{\mathrm{mm}}(\mathcal{C}_{\alpha}(o))$, i.e.,
\begin{equation}
V_\alpha\!\left(a_{\mathrm{mm}}(\mathcal{C}_{\alpha}(o))| o\right)\ge U_{\mathrm{mm}}(\mathcal{C}_{\alpha}(o)).
\end{equation}
\end{lemma}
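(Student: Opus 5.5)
The plan is to show that the number $\nu^\star := U_{\mathrm{mm}}(\mathcal{C}_{\alpha}(o))$ belongs to the feasible set in the definition \eqref{eq:var} of $V_\alpha$ for the action $a^\star := a_{\mathrm{mm}}(\mathcal{C}_{\alpha}(o))$. Once feasibility holds, the supremum in \eqref{eq:var} is at least $\nu^\star$, which is exactly the claim. So the argument reduces to one probability inequality: $\Pr_{s\sim p(s|o)}[u(s,a^\star)\ge \nu^\star]\ge 1-\alpha$.

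The first step uses the definitions \eqref{eq:maxmin} and \eqref{eq:umm} to obtain a pointwise bound on the set. Since $a^\star$ attains the outer maximum, $\nu^\star = \min_{s\in\mathcal{C}_{\alpha}(o)} u(s,a^\star)$. Hence every $s\in\mathcal{C}_{\alpha}(o)$ satisfies $u(s,a^\star)\ge \nu^\star$, which gives the event inclusion
\begin{equation*}
\{s\in\mathcal{C}_{\alpha}(o)\}\subseteq\{u(s,a^\star)\ge\nu^\star\}.
\end{equation*}

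The second step applies monotonicity of probability under $p(s|o)$ together with the coverage condition \eqref{eq:coverage}:
\begin{equation*}
\Pr_{s\sim p(s|o)}\!\left[u(s,a^\star)\ge\nu^\star\right]\ge \Pr_{s\sim p(s|o)}\!\left[s\in\mathcal{C}_{\alpha}(o)\right]\ge 1-\alpha .
\end{equation*}
Thus $\nu^\star$ is feasible in \eqref{eq:var}, and $V_\alpha(a^\star|o)\ge\nu^\star$ follows.

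The argument is elementary, so the main obstacle is not analytic but a matter of well-posedness. The proof needs the max and min in \eqref{eq:maxmin}--\eqref{eq:umm} to be attained, that is, $a^\star$ must exist and the minimum over the set must be reached. I would state this as a standing assumption, for example finite $\mathcal{A}$ and a finite or compact set with continuous $u$. If attainment fails, I would replace min by inf. In that case $\inf_{s\in\mathcal{C}_\alpha(o)}u(s,a^\star)=\nu^\star$ still gives the pointwise bound $u(s,a^\star)\ge\nu^\star$ on the set, so the same argument goes through. If the outer max is also not attained, one applies the argument to an $\epsilon$-optimal action and obtains $V_\alpha\ge \nu^\star-\epsilon$. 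I would also note that the event $\{u(s,a^\star)\ge\nu^\star\}$ must be measurable, which holds under the usual measurability of $u(\cdot,a)$.
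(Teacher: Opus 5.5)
Your proof is correct and follows the paper's argument essentially step for step. Both proofs use the event inclusion $\{s\in\mathcal{C}_{\alpha}(o)\}\subseteq\{u(s,a_{\mathrm{mm}})\ge U_{\mathrm{mm}}(\mathcal{C}_{\alpha}(o))\}$, together with monotonicity and the coverage condition, to show that $U_{\mathrm{mm}}(\mathcal{C}_{\alpha}(o))$ is feasible in the supremum defining $V_\alpha$; your remarks on attainment of the max/min and on measurability are sensible refinements that the paper leaves implicit.
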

\begin{proof}
The claim is equivalent to the inequality
\begin{equation}
\label{eq:mmguarantee}
\Pr_{s\sim p(s| o)}\!\left[\,u\!\left(s,a_{\mathrm{mm}}(\mathcal{C}_{\alpha}(o))\right)\ge U_{\mathrm{mm}}(\mathcal{C}_{\alpha}(o))\,\right]\ge 1-\alpha.
\end{equation}
Write $a_{\mathrm{mm}}=a_{\mathrm{mm}}(\mathcal{C}_{\alpha}(o))$ to simplify the notation. Every
state $s\in \mathcal{C}_{\alpha}(o)$ satisfies the inequality 
\begin{equation} u(s,a_{\mathrm{mm}})\ge\min_{s'\in \mathcal{C}_{\alpha}(o)}u(s',a_{\mathrm{mm}})=U_{\mathrm{mm}}(\mathcal{C}_{\alpha}(o)),\end{equation}
so the event $\{s\in \mathcal{C}_{\alpha}(o)\}$ is contained in the event 
$\{u(s,a_{\mathrm{mm}})\ge U_{\mathrm{mm}}(\mathcal{C}_{\alpha}(o))\}$. Hence, by
\eqref{eq:coverage}, we have the inequality 
$\Pr_{s\sim p(s|o)}[u(s,a_{\mathrm{mm}})\ge U_{\mathrm{mm}}(\mathcal{C}_{\alpha}(o))]\ge\Pr_{s\sim p(s|o)}[s\in \mathcal{C}_{\alpha}(o)]\ge 1-\alpha$. This coincides with \eqref{eq:mmguarantee}, thus concluding the proof. 
\end{proof}

Lemma~\ref{lem:guarantee} shows that {any} prediction set with a
coverage guarantee yields, through the max--min rule, an action with a
certified value-at-risk. The next proposition shows that the value-at-risk
optimal action \eqref{eq:optactionalpha} is recovered by choosing the
coverage set optimally, so that a max--min decision based on a suitably
designed prediction set incurs no loss of optimality. 

\begin{proposition}[Prediction sets are the optimal interface for risk-averse decisions]
\label{prop:ra}
The optimal risk-averse action \eqref{eq:optactionalpha} is the max--min
action of an optimally chosen level-$\alpha$ prediction set, i.e., 
\begin{equation}
\label{eq:raset}
a_\alpha^*(o)=a_{\mathrm{mm}}\!\left(C_\alpha^*(o)\right),
\end{equation}
where the optimized prediction set is given by
\begin{equation}
\label{eq:optset}
C_\alpha^*(o)\in\!\!\argmax_{\substack{\mathcal{C}\subseteq\Sset:\\ \Pr_{s\sim p(s| o)}[s\in \mathcal{C}]\ge 1-\alpha}}\!\!U_{\mathrm{mm}}(\mathcal{C}).
\end{equation}
\end{proposition}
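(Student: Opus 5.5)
The plan is to show that the optimal value of the set-optimization problem \eqref{eq:optset} coincides with the optimal value-at-risk $\max_a V_\alpha(a| o)$. Then I will argue that the max--min action of an optimizing set attains it. Write $V^*=\max_{a\in\Aset}V_\alpha(a| o)$ and $U^*=\max_{\mathcal{C}}U_{\mathrm{mm}}(\mathcal{C})$, with the maximum over all level-$\alpha$ sets. I will assume, as implicit in \eqref{eq:optactionalpha} and \eqref{eq:maxmin}, that the relevant maxima and minima are attained, e.g.\ for finite $\Sset$ and $\Aset$.

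\emph{Step 1 ($U^*\le V^*$).} This is Lemma~\ref{lem:guarantee}. For any level-$\alpha$ set $\mathcal{C}$, the lemma gives $U_{\mathrm{mm}}(\mathcal{C})\le V_\alpha(a_{\mathrm{mm}}(\mathcal{C})| o)\le V^*$. Taking the maximum over $\mathcal{C}$ yields $U^*\le V^*$.

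\emph{Step 2 ($U^*\ge V^*$).} Here I construct an explicit set. Let $a^*_\alpha=a^*_\alpha(o)$ and $\nu^*=V_\alpha(a^*_\alpha| o)$, and define the superlevel set
\begin{equation*}
\mathcal{C}^\sharp=\{s\in\Sset:\ u(s,a^*_\alpha)\ge \nu^*\}.
\end{equation*}
I need $\Pr_{s\sim p(s|o)}[s\in\mathcal{C}^\sharp]\ge 1-\alpha$, i.e.\ that the supremum in \eqref{eq:var} is attained. This follows from left-continuity of $\nu\mapsto\Pr[u(s,a)\ge\nu]$, which is non-increasing. Indeed, $\{u\ge\nu^*\}=\bigcap_n\{u\ge\nu^*-1/n\}$, and each event in the intersection has probability at least $1-\alpha$. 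Continuity from above of probability measures then gives the bound. I expect this to be the main technical point, since it is where the sup in the VaR definition must be handled. Given coverage, $\mathcal{C}^\sharp$ is feasible for \eqref{eq:optset}, and by construction
\begin{equation*}
U_{\mathrm{mm}}(\mathcal{C}^\sharp)\ge\min_{s\in\mathcal{C}^\sharp}u(s,a^*_\alpha)\ge\nu^*=V^*.
\end{equation*}
Hence $U^*\ge V^*$, and therefore $U^*=V^*$.

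\emph{Step 3 (identifying the action).} Let $C^*_\alpha(o)$ be any maximizer in \eqref{eq:optset}, and set $\hat a=a_{\mathrm{mm}}(C^*_\alpha(o))$. Lemma~\ref{lem:guarantee} gives
\begin{equation*}
V_\alpha(\hat a| o)\ge U_{\mathrm{mm}}(C^*_\alpha(o))=U^*=V^*.
\end{equation*}
So $\hat a$ attains the maximum in \eqref{eq:optactionalpha}, and \eqref{eq:raset} holds. Equality is meant up to ties in the argmax: when the maximizer is unique, $\hat a=a^*_\alpha(o)$ exactly. Otherwise, I choose $C^*_\alpha(o)=\mathcal{C}^\sharp$. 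This set is optimal by Step 2, and its max--min argmax contains $a^*_\alpha$ with ties broken consistently. I will note this interpretation of the argmax explicitly.
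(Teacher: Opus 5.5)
Your proof is correct and uses the same two ingredients as the paper. For one inequality you use the inclusion $\{s\in\mathcal{C}\}\subseteq\{u(s,a)\ge\min_{s'\in\mathcal{C}}u(s',a)\}$ (i.e., Lemma~\ref{lem:guarantee}), and for the other the superlevel set $\{s: u(s,a)\ge V_\alpha(a| o)\}$. The paper arranges these as a per-action identity $V_\alpha(a| o)=\max_{\mathcal{C}}\min_{s\in\mathcal{C}}u(s,a)$ followed by swapping the two maximizations, whereas you compare the optimal values directly. Two of your steps make rigorous what the paper only asserts: the continuity-from-above argument that the supremum in \eqref{eq:var} is attained, and the explicit use of Lemma~\ref{lem:guarantee} (with your tie-breaking remark) to show that $a_{\mathrm{mm}}(C_\alpha^*(o))$ is VaR-optimal.
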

\begin{proof}
We first establish, for a fixed action $a$, the identity
\begin{equation}
\label{eq:varset}
V_\alpha(a| o)=\!\!\max_{\substack{\mathcal{C}\subseteq\Sset:\\ \Pr_{s\sim p(s| o)}[s\in \mathcal{C}]\ge 1-\alpha}}\!\!\min_{s\in \mathcal{C}}u(s,a),
\end{equation}
which states that the value-at-risk equals the best worst-case utility achievable over a
covering set. Given this result, optimizing the value-at-risk over the action $a$ yields the max-min action $a_{\mathrm{mm}}\!\left(C_\alpha^*(o)\right)$, as claimed in the proposition. 

For the ``$\ge$'' direction in \eqref{eq:varset}, take any
feasible set $\mathcal{C}$, satisfying the coverage condition (\ref{eq:coverage}), and let $m(\mathcal C)=\min_{s\in \mathcal{C}}u(s,a)$. Then, we have the inclusion relationship 
$\{s\in \mathcal{C}\}\subseteq\{u(s,a)\ge m(\mathcal C)\}$, which implies the inequality 
$\Pr_{s\sim p(s| o)}[u(s,a)\ge m(\mathcal C)]\ge\Pr_{s\sim p(s| o)}[s\in \mathcal{C}]\ge1-\alpha$. In turn,  by the
definition \eqref{eq:var} of the value-at-risk, this gives the inequality 
$m(\mathcal{C})\le V_\alpha(a| o)$, which holds also when optimizing over the set. This proves the desired  $\ge$ inequality in \eqref{eq:varset}.

For the reverse direction, consider the set
$\mathcal{C}_0=\{s:u(s,a)\ge V_\alpha(a|o)\}$. By definition of value-at-risk, the set $\mathcal{C}_0$ is  feasible for problem \eqref{eq:varset}, and
satisfies the inequality $\min_{s\in C_0}u(s,a)\ge V_\alpha(a|o)$, proving the desired $\le$ inequality in \eqref{eq:varset}.

\end{proof}

The results developed in this section show that  a risk-{averse} agent can operate without loss of optimality on the basis of a {prediction set}
$\mathcal{C}_{\alpha}(o)\subseteq\Sset$ with coverage guarantees, acting by targeting the worst case over that set \eqref{eq:maxmin}. This is illustrated in Fig.~\ref{fig:racinterface}(b). In this regard, by Lemma~\ref{lem:guarantee}, any set satisfying the coverage condition \eqref{eq:coverage} yields a
max--min action with the value-at-risk guarantee \eqref{eq:mmguarantee}. Moreover, an optimized  set
-- namely $C_\alpha^*(o)$ in \eqref{eq:optset} -- yields the largest value-at-risk certificate, 
recovering the value-at-risk optimal action $a_\alpha^*(o)$.

So far coverage \eqref{eq:coverage} has been imposed \emph{conditionally},
i.e., for each observation $o$. One may instead require coverage \emph{on
average} over observations,
\begin{equation}
\label{eq:margcoverage}
\Pr_{(s,o)\sim p(s,o)}\!\left[\,s\in \mathcal{C}_{\alpha}(o)\,\right]\ge 1-\alpha,
\end{equation}
which is weaker than \eqref{eq:coverage}. Prediction sets with the marginal
guarantee \eqref{eq:margcoverage} can be constructed by \emph{conformal
prediction} using only data, without knowledge of the environment 
\cite{vovk2005algorithmic,angelopoulos2023gentle}. Reference
\cite{kiyani2025conformal} shows that results analogous to
Lemma~\ref{lem:guarantee} and Proposition~\ref{prop:ra} hold when the
value-at-risk objective \eqref{eq:var} is replaced by a suitable average,
placing conformal prediction on a decision-theoretic footing. The next sections address the setting with an unknown environment.

\subsection{Unknown Environment: Three Routes to Uncertainty Quantification}
\label{sec:unknown}
Suppose now that the environment distributions $p(s)$ and $p(o| s)$ are unknown. We treat this setting in three ways,
each highlighting a different approach to uncertainty quantification. For a preview of the main results, 
Fig.~\ref{fig:racinterface}(c)--(e) summarizes the resulting interfaces.

\noindent $\bullet$  \emph{Fixed predictor.} A predictor $\hat{p}(s| o)$ is supplied to the agent.
Under what conditions can the agent act on it, and
trust that the resulting performance is as good as it can be given that the policy can be based solely on the predictor? The answer motivates \emph{calibration} as a key property of predictors 
(Sec.~\ref{sec:fixed}).

\noindent $\bullet$  \emph{Data-driven non-parametric predictor.} No predictor is given,
but a dataset
\begin{equation}
\label{eq:dataset}
\Dset=\{(o_i,s_i)\}_{i=1}^{n}\ \stackrel{\text{i.i.d.}}{\sim}\ p(s,o)=p(s)\,p(o| s)
\end{equation}
of $n$ observation--state pairs is available. Because the dataset size $n$ is finite, plugging
a naive estimate into the known-environment solution is generally suboptimal, and 
the agent must account for the epistemic uncertainty caused by the availability of  limited data. { A
non-parametric treatment of this uncertainty requires making no assumptions about the environment, reducing bias \cite{simeone2022ml} but typically increasing data requirements, i.e., the dataset size $n$. With non-parametric estimators, an optimized interface with the decision maker will be seen to encompass  \emph{credal sets} and
distributionally robust optimization (Sec.~\ref{sec:npar}) in order to account for epistemic uncertainty.}

\noindent $\bullet$  \emph{Data-driven parametric predictor.} {Under parametric  
assumptions on the environment, the agent can generally decrease the sample size $n$, but at the cost of potentially introducing an irreducible bias in the training of a predictor \cite{simeone2022ml}. In this case,  epistemic uncertainty can be optimally accounted for by \emph{Bayesian
inference} over model parameters (Sec.~\ref{sec:par}).}

\subsection{ Fixed Predictors and Calibration}
\label{sec:fixed}
Assume a fixed predictor $\hat p(s| o)$, generally not equal to the true
posterior $p(s| o)$. Treating the predictor $\hat{p}(s| o)$ as if it were the true posterior, the
agent takes, in place of \eqref{eq:optaction}, the action
\begin{equation}\label{eq:optaction1}
a^*(\hat{p}(s| o))=\argmax_{a \in \Aset} \hat U(a| o),
\end{equation}
where the estimated utility under the predictor is
\begin{equation}
\label{eq:uhat}
\hat U(a| o)= \E_{s \sim \hat{p}(s| o)}[u(s,a)].
\end{equation}
When does  action (\ref{eq:optaction1}) guarantee some notion of optimality? When can the utility estimate be taken as a reliable estimate of the action's performance? As seen in this section, this is the case  when the predictor is
\emph{calibrated}.

A predictor is \emph{distribution calibrated} if, among all observations $o$
that yield any given predictive distribution $q(s)$, i.e.\
$\hat{p}(s| o)=q(s)$, the true fraction of states equal to $s$ equals the
predicted value $q(s)$. Formally, with
\begin{equation}
\label{eq:Oq}
\Oset_{q(s)}=\{o\in \Oset:\hat p(s'| o)=q(s')\ \text{for all }s'\in\Sset\}
\end{equation}
denoting the set of observations mapped to the predictive distribution $q(s)$, a
distribution-calibrated predictor satisfies
\begin{equation}
\label{eq:distcal}
p(s' | o\in\Oset_{q(s)})=q(s')\qquad \text{for all } s'\in\Sset
\end{equation} and for all predictive distributions $q(s)$ with non-zero probability. 
In words, a distribution-calibrated model provides predictive probabilities
that reflect the true conditional state frequencies.

\begin{example}[Calibration is not accuracy]
\label{ex:cal}
Two very different predictors are both distribution calibrated. As expected, the true posterior $\hat p(s| o)=p(s| o)$ is distribution calibrated, since each observation $o$ in set $\Oset_{p(s|o)}$ has true conditional distribution $p(s|o)$, implying the condition (\ref{eq:distcal}). However, a completely uninformative predictor such as the marginal $\hat p(s| o)=p(s)$, which ignores the
observation $o$, is also distribution calibrated. In fact, here every observation maps to the same predictive
distribution $q(s)=p(s)$, so the set  $\Oset_{q(s)}$ coincides with the entire observation domain $\mathcal O$, and 
\begin{equation}
p(s'| o\in\Oset_{q(s)})=p(s'| o\in\Oset)=p(s')
\end{equation} for all $s' \in \mathcal{S}$,   
recovering the condition \eqref{eq:distcal}. Both predictors are calibrated, yet only the posterior is
informative, showing that calibration is distinct from informativeness and accuracy.
\end{example}

{
\begin{example}[Visualizing calibration and accuracy]
The distinction between accuracy and calibration for a predictor $\hat p(s|o)$ can be conveniently illustrated visually by focusing on a setting with a binary state $s\in\mathcal{S}=\{0,1\}$. Denote as $\hat p \in [0,1]$ the predictive probability $\hat p(s=1|o)$, which determines the  entire predictive distribution $\hat p_o$ in this example. Accuracy and calibration are properties of the joint distribution $p(\hat p,s)$ of the prediction $\hat p$ and of the true state $s$.

Accuracy is high when the conditional distributions $p(\hat p|s=0)$ and $p(\hat p|s=1)$ are skewed, respectively,  towards $s=0$ and $s=1$, having a small overlap.  Examples of such distributions can be found in panels (a) and (c) of Fig. \ref{fig:cal-vs-acc}, while panels (b) and (d) show distributions that are significantly overlapped, corresponding to a lower accuracy. So,  assessing accuracy requires a \emph{horizontal} reading of the plot, as it is dictated by how far apart the two densities are located  along the $\hat p$ axis, i.e., by how little they overlap. 

In the setting of this example, the calibration condition in \eqref{eq:distcal} can be expressed as\begin{equation}
p(s=1|\hat p)=\hat p, 
\label{eq:distcal-binary}
\end{equation} which must hold for all predictive probabilities $\hat p \in[0,1]$ with non-zero probability. This requires that the fraction of outcomes with $s=1$ when the predictive probability is $\hat{p}$ is actually equal to $\hat{p}$. Using Bayes theorem and assuming for simplicity that the state is uniformly distributed, this condition can be rewritten as 
\begin{equation}
\frac{\,p(\hat p|s=1)}{\,p(\hat p|s=1)+\,p(\hat p|s=0)}=\hat{p}.
\label{eq:posterior-ratio}
\end{equation} Accordingly, in the plots of Fig. \ref{fig:cal-vs-acc}, calibration requires a \emph{vertical}, pointwise reading. In order for the predictor to be perfectly calibrated, at each value  $\hat p$, the fraction of the total height carried by the curve $p(\hat p|s=1)$ must equal $\hat p$ itself. 

The panels in Fig. \ref{fig:cal-vs-acc} show the probability $p(s=1|\hat{p})$  along with the prediction $\hat p$  (for values of $\hat{p}$ with positive probability).  It is observed that the distributions in panel (a) correspond to a setting with a calibrated and accurate predictor; panel (b) to a calibrated but inaccurate predictor, reproducing the predictor always returning the state marginal $p(s)=0.5$ discussed in Example 1; panel (c) to a miscalibrated and accurate predictor, which over-estimates the probability of event $s=1$ for $\hat p \leq 0.5$ and under-estimates it otherwise; while panel (d) represents a miscalibrated and
inaccurate predictor, which over-estimates and under-estimates the probability of
the event $s=1$ in alternating intervals of $\hat p$.
\end{example}

\begin{figure}[t]
\centering
\begin{tikzpicture}
\begin{groupplot}[
    group style={group size=2 by 2, horizontal sep=1.0cm, vertical sep=1.0cm,
                 xlabels at=edge bottom},
    width=3.35cm, height=2.75cm, scale only axis,
    xmin=0, xmax=1, ymin=0, ymax=2.2,
    xtick={0,0.5,1},
    ytick={0,0.5,1,2},
    tick label style={font=\tiny},
    label style={font=\scriptsize},
    title style={font=\scriptsize, yshift=-2pt},
    xlabel={$\hat p$},
    samples=201, domain=0:1,
    every axis plot/.append style={line width=0.8pt},
    clip=true,
]

\nextgroupplot[title={(a) calibrated, accurate},
    legend to name=leg:calvsacc,
    legend columns=2, legend cell align=left,
    legend style={draw=black, line width=0.4pt, inner sep=4pt, font=\scriptsize,
                  column sep=4pt,
                  /tikz/every even column/.append style={column sep=14pt}}]
\addplot[blue!70!black] {2*x};
\addlegendentry{$p(\hat p|s{=}1)$}
\addplot[red!75!black, dashed] {2*(1-x)};
\addlegendentry{$p(\hat p|s{=}0)$}
\addplot[black!55, line width=0.6pt] {x};
\addlegendentry{$p(s=1|\hat p)$}
\addplot[black!30, densely dotted, line width=0.6pt] {x};
\addlegendentry{$\hat p$}

\nextgroupplot[title={(b) calibrated, inaccurate}]
\addplot[blue!70!black, ->] coordinates {(0.5,0) (0.5,1)};
\addplot[red!75!black, dashed, ->] coordinates {(0.5,0) (0.5,1)};
\addplot[black!55, only marks, mark=*, mark size=1.1pt] coordinates {(0.5,0.5)};
\addplot[black!30, only marks, mark=o, mark size=2.2pt] coordinates {(0.5,0.5)};

\nextgroupplot[title={(c) miscalibrated, accurate}]
\addplot[blue!70!black] coordinates {(0,0) (0.5,0) (0.5,2) (1,2)};
\addplot[red!75!black, dashed] coordinates {(0,2) (0.5,2) (0.5,0) (1,0)};
\addplot[black!55, line width=0.6pt] coordinates {(0,0) (0.5,0) (0.5,1) (1,1)};
\addplot[black!30, densely dotted, line width=0.6pt] {x};

\nextgroupplot[title={(d) miscalibrated, inaccurate}]
\addplot[blue!70!black] coordinates {(0,0) (0.2,0) (0.2,1.25) (1,1.25)};
\addplot[red!75!black, dashed] coordinates {(0,1.25) (0.8,1.25) (0.8,0) (1,0)};
\addplot[black!55, line width=0.6pt] coordinates {(0,0) (0.2,0) (0.2,0.5) (0.8,0.5) (0.8,1) (1,1)};
\addplot[black!30, densely dotted, line width=0.6pt] {x};

\end{groupplot}
\end{tikzpicture}

\vspace{2pt}
\pgfplotslegendfromname{leg:calvsacc}   
\caption{Conditional distributions $p(\hat p|s=0)$ and $p(\hat p|s=1)$ of the predicted probability $\hat p=\hat p(s=1|o)$ given the binary state $s$, for four predictors, together with the probability $p(s=1|\hat{p})$ and with the predictive probability $\hat{p}$.}
\label{fig:cal-vs-acc}
\end{figure}}

The next proposition relates calibration to optimal decision making.

\begin{proposition}[Calibration supports optimal decisions {\cite[Prop.~1]{zhao2021calibrating}}]
\label{prop:cal}
If $\hat p(s| o)$ is distribution calibrated, then the plug-in policy that
selects $a^*(\hat p(s| o))$ in \eqref{eq:optaction1} is optimal among all
policies that depend on the observation $o$ only through the predictor, i.e.,
\begin{equation}\label{eq:optcal}
a^*(\hat p)\in\argmax_{a(\hat p)}\ \E_{o \sim p(o)}\!\left[ U\!\left(a(\hat{p})| o\right)\right], 
\end{equation} where  $\hat p$ stands for the predictor $\hat p(s| o)$.
\end{proposition}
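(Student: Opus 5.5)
The plan is to mirror the decoupling argument of Sec.~\ref{sec:known}, but with the predictive distribution $\hat p$ playing the role of the observation. A policy that depends on $o$ only through the predictor is a map $q\mapsto a(q)$ applied to $q=\hat p(\cdot|o)$. Such a policy is constant on each set $\Oset_{q(s)}$ in \eqref{eq:Oq}. I would therefore condition on the random variable $\hat p$, rather than on $o$, and use the law of iterated expectations.

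Concretely, I would write
\begin{equation*}
\E_{o\sim p(o)}\bigl[U(a(\hat p)|o)\bigr]=\E_{q}\Bigl[\E_{o\sim p(o|o\in\Oset_{q(s)})}\bigl[U(a(q)|o)\bigr]\Bigr],
\end{equation*}
where the outer expectation is over the distribution of $q=\hat p(\cdot|o)$ induced by $o\sim p(o)$. In the inner term the action $a(q)$ is fixed. By \eqref{eq:uoa} and linearity, the inner term then equals
\begin{equation*}
\E_{o|o\in\Oset_{q(s)}}\bigl[\E_{s\sim p(s|o)}[u(s,a(q))]\bigr]=\E_{s\sim p(s|o\in\Oset_{q(s)})}[u(s,a(q))].
\end{equation*}
This is the tower property: the mixture of the posteriors $p(s|o)$ over $o\in\Oset_{q(s)}$ is exactly $p(s|o\in\Oset_{q(s)})$. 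Under distribution calibration \eqref{eq:distcal}, this mixture equals $q(s)$. The inner term therefore reduces to $\E_{s\sim q(s)}[u(s,a(q))]=\hat U(a(q)|o)$ for any $o\in\Oset_{q(s)}$.

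The objective thus decouples across the values $q$ taken by the predictor, exactly as \eqref{eq:inner} decoupled across $o$. For each $q$ the best choice is $\argmax_a \E_{s\sim q}[u(s,a)]$, which is the plug-in action $a^*(\hat p)$ of \eqref{eq:optaction1}. Pointwise maximization inside the outer expectation then gives optimality among all such policies. Randomized policies $\pi(a|\hat p)$ are handled by the same linearity argument used after \eqref{eq:inner}, so deterministic maps suffice.

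The main obstacle I expect is the conditioning step when $\hat p$ takes a continuum of values. In that case the sets $\Oset_{q(s)}$ may have zero probability, and \eqref{eq:distcal} must be read as $p(s|\hat p=q)=q(s)$ almost surely. I would state the argument for finitely or countably many predictive distributions with positive probability, matching the phrasing ``with non-zero probability'' in \eqref{eq:distcal}. I would then remark that the general case follows by replacing the conditional probabilities with regular conditional distributions given the random variable $\hat p$, with the same tower-property identity.
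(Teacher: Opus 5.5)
Your proposal is correct and follows essentially the same route as the paper: condition on the predictor output $\hat p_o$, use distribution calibration to replace $p(s|\hat p_o)=p(s|o\in\Oset_{\hat p_o})$ by $\hat p(s|o)$, so that the objective becomes $\E_{\hat p_o}\bigl[\hat U(a(\hat p_o)|o)\bigr]$, and then maximize pointwise to obtain the plug-in action. Your additional remarks on randomized policies and on regular conditional distributions for continuum-valued predictors are sensible refinements that the paper leaves implicit.
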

\begin{proof}
Write $\hat p_o=\hat p(s| o)$ for the random predictive vector obtained
from $o\sim p(o)$, with distribution $p(\hat p_o)$ and joint
$p(\hat p_o,s)=p(\hat p_o)\,p(s| \hat p_o)$, where $p(s|\hat p_o)=p(s | o\in\Oset_{\hat p_o})$. The conditional distribution 
$p(s| \hat p_o)$ is thus the true state distribution across all observations
mapped to the same predictive vector $\hat p_o$. By calibration \eqref{eq:distcal}, it is given by 
$p(s| \hat p_o)=\hat p(s|o)$. For any policy $a(\hat p_o)$, conditioning on
$\hat p_o$ and using calibration, we have 
\begin{align}
\E_{(s,o)\sim p(s,o)}[u(s,a(\hat p_o))]
&=\E_{\hat p_o \sim p(\hat p_o)}\!\Bigl[\textstyle\E_{s\sim p(s|\hat p_o)}[u(s,a(\hat p_o))]\Bigr]\nonumber\\
&=\E_{\hat p_o \sim p(\hat p_o)}\!\Bigl[\textstyle\E_{s\sim \hat p(s|o)}[u(s,a(\hat p_o))]\Bigr]\nonumber\\
&=\E_{\hat p_o \sim p(\hat p_o)}\!\Bigl[\hat U(a(\hat p_o)|o)\Bigr]\nonumber\\
&\leq \E_{\hat p_o \sim p(\hat p_o)}\!\bigl[ \hat U(a^*(\hat p_o)| o)\bigr],
\end{align}
where the inequality follows from the definition \eqref{eq:optaction1}.
\end{proof}

Unlike \eqref{eq:optaction}, the optimality condition \eqref{eq:optcal}
involves an expectation over observations $o$. The price of acting through a
predictor, even a calibrated one, is that optimality is generally guaranteed
only \emph{on average}, not per observation. As Example~\ref{ex:cal} shows,
this price can be steep: a calibrated predictor may be completely
uninformative, so its action cannot adapt to the observation $o$.

{ As demonstrated in the proof of Proposition \ref{prop:cal}, given a calibrated predictor $\hat p(s|o)$, the estimate (\ref{eq:uhat}) of the utility equals the true average utility conditioned on the information available at the decision maker, namely the model's output  $\hat p(s|o)$. This property implies that a calibrated model supports decision making with  \emph{predictable} performance, i.e., with a utility level that can be directly evaluated from the available model $\hat p(s|o)$. Predictability is viewed as a key requirement for reliable decision-making systems \cite{rabanser2026towards}. }

In practice, given a predictor, the level of miscalibration must be estimated from a finite validation set.
Checking full distribution calibration \eqref{eq:distcal} would require
binning the space of predictive distributions, which is impractical beyond a
few dimensions. The most common metric, namely the \emph{expected
calibration error} (ECE) \cite{guo2017calibration},  therefore restricts attention to the
predicted probability of the most likely class. While intuitive, the ECE
is a biased and inconsistent estimator: its value depends on the arbitrary
number and width of the bins, and no binning scheme can simultaneously
average away sampling noise and retain the resolution needed to localize
miscalibration \cite{arrietaibarra2022metrics}. Binning-free alternatives
instead plot the \emph{cumulative} difference between predicted and observed
outcomes, ordered by predicted probability, and come with rigorous $p$-values
for the null hypothesis of perfect calibration, giving a consistent and
reproducible assessment \cite{arrietaibarra2022metrics}. 

When a predictor is
found miscalibrated, calibration can be partially restored by post-hoc
methods such as temperature or Platt scaling, which fit a few parameters on
held-out data to align predicted probabilities with observed frequencies
\cite{guo2017calibration,zhao2021calibrating}.

\subsection{Non-Parametric Data-Driven Predictors: Credal Sets and Robust Optimization}
\label{sec:npar}
Consider now the case with no predictor given but a dataset \eqref{eq:dataset}
available. In this section, we take both variables  $s$ and $o$ discrete, with finite cardinalities
$|\Sset|$ and $|\Oset|$. A natural first step is to evaluate the \emph{histogram}, the
simplest non-parametric estimator of $p(s| o)$. Define the sub-dataset
\begin{equation}
\label{eq:subdataset}
\Dset_o=\{(o_i,s_i)\in\Dset : o_i=o\}
\end{equation}
of the $n_o=|\Dset_o|$ pairs with observation $o$. The histogram is the
vector of state frequencies
\begin{equation}
\label{eq:emppost}
\hat p^{\text{hist}}(s| o)=\frac{1}{n_o}\sum_{i:\,o_i=o}\mathbf 1\{s_i=s\},
\end{equation}
where $\mathbf 1\{\cdot\}$ is the indicator function. The corresponding
estimated utility \eqref{eq:uhat} for a fixed action $a$ is the empirical mean
\begin{equation}
\label{eq:uhathist}
\hat U(a|\Dset_o) = \frac{1}{n_o} \sum_{(o_i,s_i) \in \Dset_o} u(s_i,a).
\end{equation} { Note that this estimate does not require any specific assumption on the environment, and is thus free of of any bias that may affect the ultimate large-sample performance of data-driven estimates \cite{simeone2022ml}.}

However, this quantity is not a reliable performance certificate, and thus it does not lead to predictable performance guarantees. To see why, conditioned on the dataset size $n_o$, consider
the \emph{out-of-sample disappointment} \cite{bental2013robust}
\begin{equation}
\label{eq:disappointment}
\Pr_{\Dset_o\sim p(s| o)^{\otimes n_o}}\!\bigl[\,U(a| o) < \hat U(a|\Dset_o)\,\bigr],
\end{equation}
i.e., the probability that the true utility falls short of its data-driven
estimate over draws of the training data. The notation $p(s| o)^{\otimes n_o}$ represents 
the product distribution of the i.i.d.\ dataset $\Dset_o$. By the central
limit theorem, this probability tends to $1/2$ as $n_o$ grows, implying that the plug-in
estimate is optimistic about half the time. Quantifying the uncertainty of
$\hat p(s| o)$ is therefore essential.

A principled remedy uses \emph{credal sets} and \emph{distributionally robust
optimization} (DRO). Define a set of conditional distributions in a
neighborhood of the empirical distribution as
\begin{equation}
\label{eq:klball}
\Pset(\Dset_o)=\bigl\{\,q(s):\ \mathrm D\!\left(q(s),\,\hat p^{\text{hist}}(s| o)\right)\le r\,\bigr\},
\end{equation}
where $\mathrm D(\cdot,\cdot)$ is a divergence between distributions and $r>0$
controls the set size. When the divergence measure  $\mathrm D$ is convex in its first argument -- as
for every $f$-divergence, including the Kullback--Leibler (KL) divergence and
total variation (TV) distance -- the set \eqref{eq:klball} is a \emph{credal
set}: a closed, convex set of probability distributions \cite{walley1991}.
The agent then optimizes the \emph{worst-case utility} over the set, solving
the DRO problem
\begin{equation}\label{eq:mmaction}
a_{\mathrm{mm}}(\Dset_o)=\arg\underbrace{\max_{a\in \mathcal{A}}\ \min_{q(s)\in \Pset(\Dset_o)} \E_{s\sim q(s)}[u(s,a)]}_{U_{\mathrm{mm}}(\Dset_o)}.
\end{equation} { When can the estimated utility $U_{\mathrm{mm}}$ be used as a reliable prediction of the performance of action $a_{\mathrm{mm}}(\Dset_o)$? Does action $a_{\mathrm{mm}}(\Dset_o)$ have any optimality property?}

Suppose the divergence and radius are chosen so that the set covers the true
posterior with probability at least $1-\delta$,
\begin{equation}
\label{eq:coveragedro}
\Pr_{\Dset_o\sim p(s| o)^{\otimes n_o}}\!\bigl[\,p(s| o)\in\Pset(\Dset_o)\,\bigr]\ge 1-\delta.
\end{equation}
For example, with the total variation (TV)  distance
$\TV(q(s),q'(s))=\tfrac12\sum_{s\in\Sset}|q(s)-q'(s)|$ and radius
\begin{equation}
\label{eq:radius}
r=\sqrt{\frac{|\Sset|\log 2+\log(1/\delta)}{2\,n_o}},
\end{equation}
a finite-sample deviation bound for the empirical distribution
\cite{weissman2003l1} guarantees
$\TV(\hat p^{\text{hist}}(s| o),\,p(s| o))\le r $ with probability at
least $1-\delta$, so that the condition  \eqref{eq:coveragedro} holds.

\begin{proposition}[Coverage implies controlled disappointment]
\label{prop:cov}
If the credal set $\Pset(\Dset_o)$ satisfies the coverage condition
\eqref{eq:coveragedro}, then the certificate $U_{\mathrm{mm}}(\Dset_o)$ in
\eqref{eq:mmaction} controls the out-of-sample disappointment of the max--min
action, in the sense that we have the inequality 
\begin{equation}
\label{eq:disappointmentbound}
\Pr_{\Dset_o\sim p(s| o)^{\otimes n_o}}\!\bigl[\,U(a_{\mathrm{mm}}(\Dset_o)| o) < U_{\mathrm{mm}}(\Dset_o) \,\bigr] \leq \delta.
\end{equation}
\end{proposition}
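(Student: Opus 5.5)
The plan mirrors the proof of Lemma~\ref{lem:guarantee}, lifted from the state level to the level of distributions. The randomness is now in the dataset $\Dset_o$ rather than in the state $s$. The argument is an event inclusion: I will show that the ``good'' event
\begin{equation*}
\mathcal{G}=\{p(s| o)\in\Pset(\Dset_o)\}
\end{equation*}
is contained in the ``no-disappointment'' event
\begin{equation*}
\{U(a_{\mathrm{mm}}(\Dset_o)| o)\ge U_{\mathrm{mm}}(\Dset_o)\}.
\end{equation*}
Taking complements and applying \eqref{eq:coveragedro} then gives \eqref{eq:disappointmentbound}.

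First, fix a realization of $\Dset_o$ in $\mathcal{G}$ and write $a_{\mathrm{mm}}=a_{\mathrm{mm}}(\Dset_o)$. By definition of the max--min problem \eqref{eq:mmaction}, $a_{\mathrm{mm}}$ attains the outer maximum, so
\begin{equation*}
U_{\mathrm{mm}}(\Dset_o)=\min_{q(s)\in\Pset(\Dset_o)}\E_{s\sim q(s)}[u(s,a_{\mathrm{mm}})].
\end{equation*}
Since on $\mathcal{G}$ the true posterior $p(s| o)$ is a feasible element of the credal set, the minimum is at most its value at $q=p(s| o)$. That value is $\E_{s\sim p(s| o)}[u(s,a_{\mathrm{mm}})]=U(a_{\mathrm{mm}}| o)$ by \eqref{eq:uoa}. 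Hence $U(a_{\mathrm{mm}}| o)\ge U_{\mathrm{mm}}(\Dset_o)$ on $\mathcal{G}$.

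Second, I conclude
\begin{equation*}
\Pr\bigl[U(a_{\mathrm{mm}}| o)<U_{\mathrm{mm}}(\Dset_o)\bigr]\le\Pr[\mathcal{G}^c]\le\delta,
\end{equation*}
with probabilities over $\Dset_o\sim p(s| o)^{\otimes n_o}$ conditioned on $n_o$.

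The only delicate point is well-posedness, which I do not expect to be a real obstacle. The inner minimum must be attained (or be read as an infimum, in which case the same inequality holds), and the outer $\argmax$ must exist. In the finite setting of this section, the credal set is a closed convex subset of the simplex, hence compact, and the objective is linear in $q$, so the minimum exists. The action set should be finite, or else the maximum attained; ties in the $\argmax$ can be broken arbitrarily, since the argument holds for any maximizer. Measurability of the events in $\Dset_o$ is trivial because $\Dset_o$ takes finitely many values.
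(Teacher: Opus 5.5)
Your proof is correct and matches the paper's argument: on the coverage event $\{p(s|o)\in\Pset(\Dset_o)\}$ the true posterior is feasible in the inner minimization, so $U_{\mathrm{mm}}(\Dset_o)\le U(a_{\mathrm{mm}}(\Dset_o)|o)$, and disappointment can only occur on the complement, which has probability at most $\delta$. Your remarks on attainment of the minimum and maximum and on measurability are sound details that the paper leaves implicit.
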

\begin{proof}
Let $E=\{p(s| o)\in\Pset(\Dset_o)\}$ be the event that the true posterior is included in the credal set, which by \eqref{eq:coveragedro}
satisfies $\Pr_{\Dset_o\sim p(s| o)^{\otimes n_o}}[E]\ge 1-\delta$. On the event $E$, the true posterior is feasible in the
inner minimization of \eqref{eq:mmaction}, yielding 
\begin{align}
U_{\mathrm{mm}}(\Dset_o) &=\min_{q(s)\in\Pset(\Dset_o)}\E_{s\sim q(s)}[u(s,a_{\mathrm{mm}})]
\nonumber \\&\le \E_{s\sim p(s|o)}[u(s,a_{\mathrm{mm}})]=U(a_{\mathrm{mm}}| o).
\end{align}
Hence disappointment cannot occur on event $E$, i.e., 
$\{U(a_{\mathrm{mm}}| o)<U_{\mathrm{mm}}(\Dset_o)\}\subseteq E^{c}$, and
therefore we have 
$\Pr_{\Dset_o\sim p(s| o)^{\otimes n_o}}[U(a_{\mathrm{mm}}| o)<U_{\mathrm{mm}}(\Dset_o)]\le\Pr_{\Dset_o\sim p(s| o)^{\otimes n_o}}[E^{c}]\le\delta$.
\end{proof}

The certificate $U_{\mathrm{mm}}(\Dset_o)$ is thus a reliable high-confidence
lower bound on the utility the data-driven action will actually achieve. With
probability at least $1-\delta$, the realized utility is no worse than the
certified value, so the agent may safely plan against $U_{\mathrm{mm}}(\Dset_o)$
rather than the optimistic empirical mean \eqref{eq:uhathist}. This DRO-based approach corresponds to 
the interface illustrated in  Fig.~\ref{fig:racinterface}(d).

Credal sets
satisfying \eqref{eq:coveragedro} trade off conservatism against tightness: a
larger set gives stronger coverage but a smaller, less useful certificate. A principled optimization criterion would hence ask  for the \emph{least conservative} credal set, which yields the pointwise largest
certificate among all rules with controlled disappointment. Reference \cite{vanparys2021optimal} shows that, in the
large-sample regime, choosing the metric $\mathrm D$ to be the KL divergence is optimal
in this sense, yielding actions whose certificate exhibits disappointment
decaying exponentially, as $\exp(-n_o r)$, where $r$ is the radius in (\ref{eq:klball}). 

\subsection{Parametric Data-Driven Predictors: Bayesian Learning and Inference}
\label{sec:par}
Under parametric assumptions, the state--observation relationship is modeled
by a family of conditional distributions $p(s| o,\theta)$ indexed by parameters $\theta\in\Theta$. { The family $p(s| o,\theta)$ constitutes a model class, and its selection inherently biases any  decision maker that relies on it.} Taking a Bayesian perspective, the decision maker selects a prior distribution $p(\theta)$ on the parameters, and thus on the models within the family $\{p(s| o,\theta) \}_{\theta \in \Theta}$. Epistemic uncertainty now resides in the
lack of knowledge about parameters $\theta$.

Given the dataset \eqref{eq:dataset}, Bayes'
rule yields the parameter posterior
\begin{equation}
\label{eq:parampost}
p(\theta|\Dset)\propto p(\theta)\prod_{i=1}^{n} p(s_i| o_i,\theta),
\end{equation}
and the \emph{posterior predictive} distribution of the state is obtained by
marginalizing over it, i.e., 
\begin{equation}
\label{eq:predictive}
p(s| o,\Dset)=\E_{\theta \sim p(\theta|\Dset)}\!\bigl[p(s| o,\theta)\bigr].
\end{equation}
Repeating the risk-neutral argument of Sec.~\ref{sec:known} with the conditional distribution $p(s| o)$
replaced by the predictive \eqref{eq:predictive}, the optimal action is
\begin{equation}
\label{eq:bayesact}
\begin{aligned}
\hat a(o)&=\argmax_{a \in \mathcal{A}}\ \E_{s \sim p(s| o,\Dset)}[u(s,a)]\\
&=\argmax_{a\in \mathcal{A}}\ \E_{\theta \sim p(\theta|\Dset)}\E_{s\sim p(s| o,\theta)}[u(s,a)].
\end{aligned}
\end{equation}

Hence, optimal risk-neutral decision making requires the parameter posterior
$p(\theta|\Dset)$, and the
predictive posterior \eqref{eq:predictive} plays exactly the role of the
known-environment interface $o\mapsto p(s| o)$, as illustrated in Fig.~\ref{fig:racinterface}(e). In this regard, it is worth noting that a plug-in estimate
that replaces the posterior $p(\theta|\Dset)$ by a point estimate such as maximum-likelihood, and acts on the distribution $p(s| o,\hat\theta)$
would discard this epistemic uncertainty and is again prone to out-of-sample
disappointment. In contrast, assuming the validity of the assumed model, consisting of family $\{p(s|o,\theta\}_{\theta \in \Theta}$ and prior $p(\theta)$, the average utility $\E_{s \sim p(s| o,\Dset)}[u(s,a)]$ offers a reliable estimate of the agent's performance. 

Since the normalizing constant in \eqref{eq:parampost} is
rarely tractable, one in practice resorts to approximate inference using methods such as Laplace
approximation, variational inference, or Markov chain Monte Carlo \cite{simeone2022ml}.

{

\section{Beyond Predictability}

As summarized in \cite{rabanser2026towards}, beyond predictability, which was covered in the previous sections, a reliable agent should ideally also address the requirements of consistency, robustness, and safety. This section outlines how these requirements can be incorporated within the decision-theoretic framework adopted in these notes.

\subsection{Consistency}
Consistency evaluates the variability of the outputs produced by a system when run multiple times under identical conditions. In the context of these notes, it can be practically measured by evaluating the variability of the action $a$ under the given policy $\pi(a|o)$. A class of consistency metrics can be accordingly derived by drawing independent samples $a\sim \pi(a|o)$ and $a'\sim \pi(a|o)$ under the given observation $o$ and evaluating the average distance $\mathbb{E}_{a,a'\stackrel{\text{i.i.d.}}{\sim} \pi(a|o)}[d(a,a')]$ for some distance measure $d(a,a')$ such as the quadratic distance $d(a,a')=\|a-a'\|^2$, yielding twice the trace of the covariance of the action $a\sim \pi(a|o)$, or the normalized Levenshtein distance or a semantics-aware edit distance for sequences \cite{rabanser2026towards}. Such metrics can be estimated via U-statistics.

It is generally preferable to keep this variability measure low, ensuring a consistent behavior of the agent under given conditions. This can be ensured for free, i.e., without affecting the achievable utility levels, whenever the set of optimal policies $\pi(a|o)$ includes a deterministic policy as in (\ref{eq:optaction}). This is the case in all settings studied in these notes except for the scenario studied in Sec. \ref{sec:npar}, which involves the DRO problem (\ref{eq:mmaction}). In fact, the robust policy optimization problem
\begin{equation} \max_{\pi(a|o)} \min_{q(s)\in \Pset(\Dset_o)} \E_{s\sim q(s),a\sim \pi(a|o)}[u(s,a)]\end{equation}
can be shown to be concave, but not linear, in the policy $\pi(a|o)$, so that its optimal solution may lie in the interior of the simplex, i.e., it may be randomized. Accordingly, robust optimization can benefit from randomization, causing utility maximization to conflict with consistency.

\subsection{Robustness}
Robustness refers to the performance degradation caused by deviations from nominal environmental conditions. These deviations can be captured in our setting via two qualitatively different situations:
\begin{itemize}
\item \emph{Task deviation:} The state distribution $p(s)$ may deviate from the nominal distribution $p_0(s)$. When the magnitude of this deviation can be quantified, such a deviation may be accounted for by assuming that the true state distribution belongs to a credal set of the form $\{p(s):\mathrm{D}(p(s)\|p_0(s))\leq r_S\}$ for some divergence measure $\mathrm{D}(\cdot\|\cdot)$ and radius $r_S\geq 0$.
\item \emph{Evidence deviation:} The observation model $p(o|s)$ may deviate from a nominal model $p_0(o|s)$, belonging to a set of the form $\{p(o|s):\mathrm{D}(p(o|s)\|p_0(o|s))\leq r_O\}$ for some radius $r_O \geq 0$ and all states $s\in\mathcal{S}$.
\end{itemize}
Utility optimization under robustness requirements can thus be addressed via DRO formulations that account for the uncertainty on the true environment captured by the credal sets describing task or evidence deviations. Accordingly, following the discussion above, maximizing utility under robustness is generally in tension with the goal of implementing consistent policies.

\subsection{Safety}
Following \cite{rabanser2026towards}, safety requirements address the question: When failures occur, how severe are the resulting consequences? To study this question in our setting, we focus here on the risk-averse agent considered in Sec. \ref{sec:ra}, as this setting provides the most natural framework to evaluate the issue of safety as it relates to utility optimization.

In this setting, fix a deterministic policy yielding an action $a(o)$ and the corresponding utility $u(s,a(o))$, which is assumed for simplicity to have a continuous distribution under the random state  $s\sim p(s|o)$. Consider also the utility certificate $V_{\alpha}(a(o)|o)$ in \eqref{eq:var}, i.e., the value-at-risk at level $\alpha$. A \emph{failure} occurs for states $s$ that yield the inequality $u(s,a(o))<V_{\alpha}(a(o)|o)$, an event denoted as
\begin{equation}
  \mathcal{E}_{o}=\bigl\{s\in\mathcal{S}:\ u(s,a(o))<V_{\alpha}(a(o)|o)\bigr\},
  \label{eq:failure_event}
\end{equation}
which occurs with probability $\alpha$, i.e., $\Pr_{s\sim p(s|o)}[\mathcal{E}_{o}]=\alpha$ by the definition \eqref{eq:var}.

The certificate provided by the value-at-risk only constrains the \emph{frequency} of failure, while providing no information about its \emph{severity}. A safety objective is thus the \emph{conditional value-at-risk}
\begin{equation}
  \CVaR_{\alpha}(o)
  =\mathbb{E}_{s\sim p(s|o)}\!\left[\,u(s,a(o))\,\middle|\,\mathcal{E}_{o}\,\right],
  \label{eq:cvar_cond}
\end{equation}
i.e., the average utility over the $\alpha$-fraction of worst-case states. This criterion can be expressed as the DRO problem \cite{ang2018dual}
\begin{equation}
  \CVaR_{\alpha}(o)
  =\min_{q(s)\in\mathcal{P}(o)}\ \mathbb{E}_{s\sim q(s)}\!\left[u(s,a(o))\right],
  \label{eq:cvar_dro}
\end{equation}
over the credal set of distributions
\begin{equation}
  \mathcal{P}(o)=\left\{q(s):
  \Dinf\!\left(q(s)\,\|\,p(s|o)\right)\leq-\log_{2}(\alpha)\right\},
  \label{eq:credal_cvar}
\end{equation}
with the max-relative entropy, i.e., the R\'enyi divergence of order infinity \cite{Simeone2026CQIT}
\begin{equation}
  \Dinf\!\left(q(s)\,\|\,p(s|o)\right)
  =\sup_{s\in\mathcal{S}}\ \log_{2}\!\left(\frac{q(s)}{p(s|o)}\right).
  \label{eq:dmax}
\end{equation}
The constraint in \eqref{eq:credal_cvar} is equivalent to the pointwise bound $q(s)\leq p(s|o)/\alpha$, and the minimizer of \eqref{eq:cvar_dro} is the posterior restricted to the failure event, i.e., $q^{\star}(s)=\frac{1}{\alpha}\,p(s|o)\,\mathrm{1}\{s\in\mathcal{E}_{o}\}$, recovering the original definition \eqref{eq:cvar_cond}.

Accordingly, controlling the safety measure given by the CVaR reduces to the problem of addressing a DRO over a credal set whose size increases as the outage probability $\alpha$ decreases. Since enlarging the credal set can only decrease the minimum in \eqref{eq:cvar_dro}, the CVaR is non-decreasing in $\alpha$ and satisfies $\CVaR_{\alpha}(o)\leq V_{\alpha}(a(o)|o)$, tending to $\inf_{s\in\mathcal{S}}u(s,a(o))$ as $\alpha\to0$: smaller values of $\alpha$ yield smaller average utilities, evaluated over a smaller and more extreme set of worst-case states. The gap $V_{\alpha}(a(o)|o)-\CVaR_{\alpha}(o)$ measures how much worse than its own certificate the agent can expect to do when that certificate fails.


}

\section{What We Have Learned}
{ The recurring lesson in these notes is that a reliable decision needs an uncertainty
representation {matched to its objective}, and a {guarantee} that
certifies the value actually obtained, offering a predictable performance level \cite{rabanser2026towards}.}

\begin{itemize}
\item In a \emph{known} environment, a risk-neutral agent needs only the
posterior $p(s| o)$ and maximizes expected utility, while a risk-averse agent
needs a \emph{prediction set} and maximizes worst-case utility. Coverage of
the set certifies a value-at-risk (Lemma~\ref{lem:guarantee}), and a suitably designed set comes with no loss of optimality  (Proposition~\ref{prop:ra}).

\item With a \emph{fixed predictor}, distribution calibration makes the
plug-in action optimal, but only \emph{on average} over observations
(Proposition~\ref{prop:cal}). Calibration is neither accuracy nor
informativeness (Example~\ref{ex:cal}), but a calibrated model yields reliable performance guarantees. 

\item With \emph{finite data}, the empirical mean is an optimistic, and hence
untrustworthy, criterion causing out-of-sample disappointment. Leveraging such a non-parametric predictor, a credal set that
covers the truth with high probability turns the max--min value into a
high-confidence certificate (Proposition~\ref{prop:cov}), and the KL-based rule is
asymptotically the least conservative such choice.

\item Under \emph{parametric} models, the parameter posterior is the
sufficient statistic, and the posterior predictive plays the role of the
known-environment interface. Point
estimates discard this uncertainty and reintroduce disappointment.
\end{itemize}

Calibration, conformal prediction, credal sets and DRO, as well as Bayesian inference
are usually presented as separate topics. Seen through a single decision
problem, they are complementary answers to one question, i.e., \emph{what is a
sufficient, trustworthy summary of uncertainty for the decision at
hand?} Together, they demonstrate that reliable decision making \cite{rabanser2026towards} needs
uncertainty quantification. { Furthermore, uncertainty quantification must be also tailored to requirements beyond predictability, including consistency, robustness, and safety.}

\section*{Author}
\emph{Osvaldo Simeone} (o.simeone@nulondon.ac.uk) is the Professor of Information Engineering at Northeastern University London, where he co-directs the Institute for  Intelligent Networked Systems (INSI), and a visiting Professor at Aalborg University. Prof. Simeone is the author of the textbooks "Machine Learning for Engineers" and "Classical and Quantum Information Theory" published by Cambridge University Press, four monographs, two edited books, and more than 250 research journal and magazine papers. He is a Fellow of the IET and IEEE.

\bibliographystyle{IEEEtran}
\bibliography{refs}

\end{document}